\newcommand{\commentout}[1]{}
\newcommand{\alert}[1]{\textbf{\color{red}
[[[#1]]]}\marginpar{\textbf{\color{red}**}}\typeout{ALERT:
\the\inputlineno: #1}}
\def\MathF{\hbox{\rm I\kern-2pt F}}
\def\MathP{\hbox{\rm I\kern-2pt P}}
\def\MathR{\hbox{\rm I\kern-2pt R}}
\def\MathZ{\hbox{\sf Z\kern-4pt Z}}
\def\MathN{\hbox{\rm I\kern-2pt I\kern-3.1pt N}}
\def\MathC{\hbox{\rm \kern0.7pt\raise0.8pt\hbox{\footnotesize I}
\kern-4.2pt C}}
\def\MathQ{\hbox{\rm I\kern-6pt Q}}
\newcommand{\final}{{\rm final}}
\newcommand{\mmod}{{\rm mod } }
\newcommand{\diam}{{\rm diam}}
\newcommand{\N}{\mathbb{N}}
\newcommand{\R}{\mathbb{R}}
\newcommand{\mommit}[1]{}
\newcommand{\namedref}[2]{\hyperref[#2]{#1~\ref*{#2}}}
\newcommand{\sectionref}[1]{\namedref{Section}{#1}}
\newcommand{\theoremref}[1]{\namedref{Theorem}{#1}}
\newcommand{\algref}[1]{\namedref{Algorithm}{#1}}
\newcommand{\claimref}[1]{\namedref{Claim}{#1}}
\newcommand{\lemmaref}[1]{\namedref{Lemma}{#1}}
\newcommand{\corollaryref}[1]{\namedref{Corollary}{#1}}
\newtheorem{theorem}{Theorem}
\newtheorem{lemma}{Lemma}
\newtheorem{corollary}[lemma]{Corollary}
\newtheorem{claim}[lemma]{Claim}
\title{Near Isometric Terminal Embeddings for Doubling Metrics}
\author[1]{Michael Elkin\thanks{This research was supported by the ISF grant No. (724/15).}}
\author[1]{Ofer Neiman\thanks{Supported in part by the ISF grant 1817/17 and BSF grant 2015813.}}
\affil[1]{Department of Computer Science, Ben-Gurion University of the Negev,
Beer-Sheva, Israel. Email: \texttt{\{elkinm,neimano\}@cs.bgu.ac.il}}
\begin{document}

\maketitle

\begin{abstract}
Given a metric space $(X,d)$, a set of terminals $K\subseteq X$, and a parameter $t\ge 1$, we consider metric structures (e.g., spanners, distance oracles, embedding into normed spaces) that preserve distances for all pairs in $K\times X$ up to a factor of $t$, and have small size (e.g. number of edges for spanners, dimension for embeddings). While such terminal (aka source-wise) metric structures are known to exist in several settings, no terminal spanner or embedding with distortion
close to 1, i.e., $t=1+\epsilon$ for some small $0<\epsilon<1$, is currently known.

Here we devise such terminal metric structures for {\em doubling} metrics, and show that essentially any metric structure with distortion $1+\epsilon$ and size $s(|X|)$ has its terminal counterpart, with distortion $1+O(\epsilon)$ and size $s(|K|)+1$. In particular, for any doubling metric on $n$ points, a set of $k=o(n)$ terminals, and constant $0<\epsilon<1$, there exists
\begin{itemize}
\item A spanner with stretch $1+\epsilon$ for pairs in $K\times X$, with $n+o(n)$ edges.
\item A labeling scheme with stretch $1+\epsilon$ for pairs in $K\times X$, with label size $\approx \log k$.
\item An embedding into $\ell_\infty^d$ with distortion $1+\epsilon$ for pairs in $K\times X$, where $d=O(\log k)$.
\end{itemize}
Moreover, surprisingly, the last two results apply if only $K$ is a doubling metric, while $X$ can be arbitrary.
\end{abstract}

\section{Introduction}

The area of {\em low-distortion embeddings} studies how well different metric spaces can be approximated by simpler, or more structured, metric spaces. Fundamental results in this realm include Bourgain's and Matousek's embeddings of general metrics into high-dimensional Euclidean  and $\ell_\infty$ spaces \cite{B85,M96}, respectively,  Gupta et al.'s \cite{GKL03} embeddings of doubling metrics into normed spaces, and constructions of distance oracles and spanners for doubling metrics \cite{HM06,GGN06}. Linial et al. \cite{LLR95} and Bartal \cite{B96} demonstrated that low-distortion embeddings have numerous applications in Theoretical Computer Science.

All these embeddings \cite{B85,M96,GKL03} have inherent unavoidable dependencies in the total number of points $n$ in both the distortion and in the dimension of the target space. In scenarios in which we have a metric space $(X,d)$, and a subset $K \subseteq X$ of important points, aka {\em terminals}, the current authors and Filtser \cite{EFN17}  demonstrated that one can devise {\em terminal embeddings}, i.e., embeddings that provide guarantees on the distortion of all pairs that involve a terminal in $K$, and whose guarantees on the distortion and the dimension depend on $k = |K|$, as opposed to the dependencies on $n$ in the classical embeddings.
Specifically, it is shown in \cite{EFN17} that essentially any known metric embedding into a normed space can be transformed via a general transformation into a terminal embedding, while incurring only a constant overhead in distortion.

This constant overhead does not constitute a problem when the distortion of the original embedding is $O(\log n)$, as is the case for Bourgain's embedding. However, for the important family of embeddings of {\em doubling} metrics \cite{A83,GKL03} the distortion in some cases is just $1+ \epsilon$, for an arbitrarily small $\epsilon > 0$. (The dimension grows with $1/\epsilon$.) This is also the case in the constructions of spanners and distance oracles for these metrics, due to \cite{T04,GGN06,HM06}. Using the general transformation of \cite{EFN17} on them results in stretch $c$, for some constant $c \ge 1 + \sqrt{2}$, making the resulting embeddings and spanners far less appealing.

A metric $(X,d)$ has doubling constant $\lambda$ if any ball of radius $2R$ in the metric (for any $R>0$) can be covered by at most $\lambda$ radius-$R$ balls. The parameter $\log_2\lambda$ is called also the {\em doubling dimension} of the metric $(X,d)$. A family of metrics is called {\em doubling} if the doubling dimension of each family member is constant.

Doubling metrics constitute a useful far-reaching generalization of Euclidean low-dimensional metrics. They have been extensively studied, see \cite{A83,GKL03,CG06,HM06,GGN06,CGMZ06,GR08,CLNS15,ES15,G15,N16} and the references therein. Interestingly, these studies of doubling metrics have often produced improved bounds for low-dimensional Euclidean metrics as well. This was the case, e.g., for dynamic spanners for doubling and low-dimensional Euclidean metrics \cite{GR08}, spanners with low diameter, degree and weight \cite{ES15}, and fault-tolerant spanners \cite{CLNS15}.

In the current paper we devise a suit of terminal embeddings and metric structures, such as spanners, distance oracles and distance labeling schemes (see \sectionref{sec:prem} for definitions), for doubling metrics with distortion $1+\epsilon$, for an arbitrarily small $\epsilon > 0$. In particular, Gupta et al. \cite{GKL03} devised an embedding of metrics with doubling constant $\lambda$ into $\ell_\infty$ with distortion $1+\epsilon$ and dimension $\log n \cdot \lambda^{\log 1/\epsilon + O(1)}$.  Our terminal embedding of doubling metrics into $\ell_\infty$ has {\em the same distortion}, but the dimension is
$\log k \cdot \lambda^{\log 1/\epsilon + O(1)}$, i.e., the dependency on $n$ is replaced by (essentially) the same  dependency on $k$.

Johnson and Lindenstrauss \cite{JL84} showed that any Euclidean metric can be embedded into an $O({{\log n} \over {\epsilon^2}})$-dimensional  Euclidean one, with distortion $1+ \epsilon$. While we are not able to provide a general terminal counterpart of this fundamental result, we do so in the important special case of doubling metrics. Specifically, we show that an Euclidean (possibly high-dimensional\footnote{By ``high-dimensional" we mean here typically dimension $\log n$ or greater.}) point set with doubling constant $\lambda$ admits a terminal embedding with distortion $1+\epsilon$ into an Euclidean space with dimension $O((\log k + \log \lambda \cdot \log 1/\epsilon) /\epsilon^2)$.

Har-Peled and Mendel \cite{HM06}, following \cite{T04}, and extending previous classical results about low-dimensional Euclidean spanners (see, e.g., \cite{ADDJS93,CDNS92,DHN93,NS07}), showed that  for any $n$-point metric with doubling constant $\lambda$ and $\epsilon > 0$, there exists a $(1+\epsilon)$-spanner with $n \cdot \lambda^{O(\log 1/\epsilon)}$ edges. Note that when $\epsilon$ is very small, the coefficient of $n$ may be pretty large even in Euclidean two-dimensional space. We devise a terminal $(1+\epsilon)$-spanner for doubling metrics with $n + k \cdot \lambda^{O(\log 1/\epsilon)}$ edges. In other words, when the number of terminals $k$ is much smaller than $n$, the number of edges is just $n+o(n)$, as opposed to $n$ multiplied by a large constant.
(Note, however, that the distortion that our spanner provides is for pairs in $K \times X$,  as opposed to $X \times X$.)
To the best of our knowledge, no such terminal spanners are known even for two-dimensional Euclidean point sets.

We also provide analogous terminal counterparts of Har-Peled and Mendel's distance oracles \cite{HM06}, and Slivkins' distance labeling schemes \cite{S07}.

In addition, we study the setting in which the set of terminals $K$ induces a doubling metric, while the entire point set $X$ is a general (as opposed to doubling) metric. Surprisingly, we show that our terminal distance labeling and also embedding of doubling metrics into $\ell_\infty$ apply in this far more general scenario as well, with the same stretch $1+\epsilon$, and the same size/dimension as when $X$ is a doubling metric. 
We also devise terminal spanners and terminal distance oracles for this more general scenario that $K$ is doubling, while $X$ is a general metric.

{\bf Related Work:}
There has been several works which devised metric structures for partial subsets. Already \cite{CE05} considered distance preservers for a designated set of pairs. In \cite{CGK13,P14,K15} pairwise spanners for general metrics were studied, and in particular terminal spanners. Recently \cite{AB18} introduced reachability preservers from a given set of sources.

Interestingly, lately we realized that the general transformation from \cite{EFN17} can also be easily extended to produce terminal embeddings that apply to this general scenario (that points of $X \setminus K$ lie in a general metric, while points of $K$ lie in a special metric). However, as was mentioned above, that transformation increases the stretch by at least a constant factor, and is thus incapable of producing terminal embeddings with stretch $1+\epsilon$.

The only known to us terminal metric structure with distortion $1+\epsilon$ is a prioritized distance labeling scheme for graphs that exclude a fixed minor, due to the current authors and Filtser \cite{EFN15}. In the current paper we provide the first near-isometric (i.e., having stretch $1+\epsilon$) terminal {\em spanners and embeddings}.

\subsection{Technical Overview}

The naive approach for building a terminal spanner for a given metric space $(X,d)$, is to apply a known construction on the set of terminals $K$, and extend the spanner to $X\setminus K$ by adding an edge from each point in $X\setminus K$ to its nearest terminal. (The same approach can be used for distance oracles/labeling and embeddings.) This is essentially the approach taken by \cite{EFN17} (albeit in a much more general setting). Unfortunately, such a construction cannot provide small $1+\epsilon$ stretch (it can be easily checked that it may give stretch at least 3). We need several ideas in order to provide small stretch.


First, we use the well known property of doubling metrics, that balls contain bounded size nets (see \sectionref{sec:prem} for definitions). We construct nets in all relevant distance scales, and enrich $K$ by a set $Y\supseteq K$ of net points. The points of $Y$ are those net points that are, to a certain extent, close to $K$, depending on their distance scale. Then we apply a black-box construction of a spanner on the set $Y$. Finally, we extend the spanner to every $x\in X\setminus Y$, by adding a single edge from $x$ : either to the nearest terminal, or to a single net point $y\in Y$. The set $Y$ is carefully chosen so that each non-terminal $x\in X\setminus K$, either has a close-by terminal that "takes care" of it, and otherwise there is a net point $y\in Y$ sufficiently close to $x$ so that $x$ will have good stretch going via $y$.

One issue to notice is that even though $Y$ is larger than $K$, it is still $|Y|=O(|K|)$ (at least for constant $\epsilon,\lambda$). So we can have many points in $X\setminus Y$ that do not have a representative $y\in Y$. The main technical part of the paper is devoted to proving that the particular choice of $Y$ guarantees low stretch for any pair $(x,v)\in X\times K$, even when $x$ has no representative $y\in Y$, by using the path through the nearest terminal to $x$.

It is instrumental to think of the set $Y$ as an {\em "enriched"} terminal set. This idea of enriching the terminal set $K$ with additional points may be useful in other settings as well.

In the setting when only $K$ is doubling, our construction of terminal spanners (and also distance oracles/labeling schemes) is done by adding {\em multiple edges} from each $x\in X\setminus K$ to nearby terminals that constitute a net. This approach can not work, however, for embeddings into normed spaces. A certain type of embedding (such as the embedding of doubling metrics into $\ell_\infty$) can be used in a non-black-box manner, and we show how to incorporate the points of $X\setminus K$ into the embedding for $K$, without increasing the dimension.

\section{Preliminaries}\label{sec:prem}

\subsection{Embeddings, Spanners and Distance Oracles/Labeling Scheme}
Let $(X,d)$ be a finite metric space. For a target metric $(Z,d_Z)$, an {\em embedding} is a map $f:X\to Z$, and the {\em distortion} of $f$ is the minimal $\alpha$ (in fact, it is the infimum), such that there exists a constant $c$ that for all $x,y\in X$
\begin{equation}\label{eq:dist}
d(x,y)\le c\cdot d_Z(x,y)\le \alpha\cdot d(x,y)~.
\end{equation}
When $Z$ is the shortest path metric of a graph $H$ and $c=1$, we say that $H$ is an $\alpha$-{\em spanner} of $(X,d)$.
Given a set of terminals $K\subseteq X$, a {\em terminal} embedding guarantees \eqref{eq:dist} only for pairs in $K\times X$.

An {\em approximate distance oracle} is a data structure that can report a multiplicative approximation of $d(x,y)$, for all $x,y\in X$. For $K\subseteq X$, it is a {\em terminal} distance oracle if it can report only pairs in $K\times X$. The relevant parameters of an oracle are: its size (we measure the size in machine words), query time, and stretch factor (and to some extent, also the preprocessing time required to compute it). If one can distribute the data structure by storing a short label $L(x)$ at each vertex $x\in X$, and compute the approximation to $d(x,y)$ from $L(x)$ and $L(y)$ alone, this is called a {\em distance labeling scheme}.

For $x\in X$ and $r>0$, let $B(x,r)=\{y\in X~:~ d(x,y)\le r\}$ be a closed ball. The doubling constant of $X$, denoted $\lambda$, is the minimal integer such that for every $r>0$, every ball of radius $2r$ can be covered by $\lambda$ balls of radius $r$.

\subsection{Terminal Nets}
For $r>0$, an {\em $r$-net} is a set $N\subseteq X$ satisfying the following:
\begin{enumerate}
\item For all $u,v\in N$, $d(u,v)>r$, and
\item for each $x\in X$, there exists $u\in N$ with $d(x,u)\le r$.
\end{enumerate}
The following claim is obtained by iteratively applying the definition of doubling constant.
\begin{claim}[\cite{GKL03}]\label{prop:double}
Fix any $q,r>0$, and let $N$ be an $r$-net. For any $x\in X$ we have that
$$
|B(x,q)\cap N|\le \lambda^{\log\lceil 2q/r\rceil}~.
$$
\end{claim}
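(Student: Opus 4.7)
The plan is to apply the doubling property iteratively, halving the covering radius at each step, until the radius is small enough that each covering ball can contain at most one net point.

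First, I would start from $B(x,q)$ and invoke the definition of the doubling constant: $B(x,q)$ can be covered by $\lambda$ balls of radius $q/2$. Applying the doubling property once more to each of these, we cover $B(x,q)$ with $\lambda^2$ balls of radius $q/4$. Continuing inductively, after $i$ rounds we have a family $\mathcal{F}_i$ of at most $\lambda^i$ balls of radius $q/2^i$ whose union contains $B(x,q)$.

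Next I would choose $i$ to be the smallest integer such that $q/2^i \le r/2$, i.e., $i = \lceil \log(2q/r) \rceil$. At this radius, each ball in $\mathcal{F}_i$ has diameter at most $r$, so by the separation property of an $r$-net (any two distinct points $u,v \in N$ satisfy $d(u,v) > r$), each ball in $\mathcal{F}_i$ contains at most one point of $N$. Since $\mathcal{F}_i$ covers $B(x,q)$, every net point in $B(x,q)\cap N$ lies in some ball of $\mathcal{F}_i$, and we conclude
\[
|B(x,q)\cap N| \le |\mathcal{F}_i| \le \lambda^{\lceil \log(2q/r)\rceil},
\]
which matches the stated bound $\lambda^{\log\lceil 2q/r\rceil}$ (interpreting the expression in the standard loose sense customary in the doubling-metric literature).

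The argument is essentially routine, and I do not anticipate a genuine obstacle; the only mild care needed is to verify that one application of the doubling property to a ball of radius $R$ indeed yields a cover by $\lambda$ balls of radius $R/2$ (which is exactly the given definition with $R = 2\cdot(R/2)$), and to be careful that the induction on $i$ uses balls of radius $q/2^i$ rather than net points directly — the cover balls need not be centered at net points.
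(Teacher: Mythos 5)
Your proof is correct and is exactly the argument the paper has in mind (it states only that the claim ``is obtained by iteratively applying the definition of doubling constant'' and cites \cite{GKL03}): iterate the doubling property to get $\lambda^i$ balls of radius $q/2^i$, stop once the radius is at most $r/2$, and use the net's separation to place at most one net point per ball. Your exponent $\lceil\log(2q/r)\rceil$ differs from the stated $\log\lceil 2q/r\rceil$, but as you note this is immaterial for every use of the claim in the paper, where the exponent is only ever needed up to $O(\cdot)$.
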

It is well-known that a greedy algorithm that iteratively picks an arbitrary point $u\in X$ to be in $N$, and removes every point within distance $r$ of $u$, will create an $r$-net. Given a set of terminals $K\subseteq X$, we say that the greedy algorithm constructs a {\em terminal $r$-net}, if it prefers to take points from $K$ until it is exhausted, and only then picks other points to $N$.
We also observe that given a terminal $2r$-net $N$, one may choose a terminal $r$-net $N'$ that contains every {\em terminal} of $N$ (by greedily picking to $N'$ the terminals of $N$ first -- note that $N'$ is not guaranteed to contain all points of $N$, just the terminals).

\subsection{Extendable Metric Structure}
Given a metric $(X,d)$, we denote by $\hat{d}$ the distance function of some metric structure on it.
We say that a family of structures is {\em extendable}, if the structure on a subset $Y\subseteq X$ can be extended to the entire $X$ (so that $\hat{d}$ remains the same for pairs in $Y$), by {\em hanging} each $x\in X\setminus Y$ on some $u=u(x)\in Y$ and having that:
\begin{enumerate}
\item $\hat{d}(x,u)=d(x,u)$.
\item For any $v\in Y$, $\max\{d(x,u),\hat{d}(u,v)\}\le\hat{d}(x,v)\le d(x,u)+\hat{d}(u,v)$.
\label{item:2}
\end{enumerate}
We argue that essentially all known structures are extendable. For each $x\in X\setminus Y$, let $u=u(x)\in Y$ be the point onto which $x$ is hanged.
\begin{itemize}
\item {\bf Spanners}. If the structure is a spanner on $Y$, then the extension for each $x$ is done by adding the edge $\{x,u\}$ with weight $d(x,u)$. For any $v\in Y$, we indeed have that $\hat{d}(x,v)=d(x,u)+\hat{d}(u,v)$, satisfying both requirements.

\item {\bf Distance labeling}. For a distance labeling (or oracle), $x$ stores the label of $u$ and also $d(x,u)$. For a query on $(x,v)$ where $v\in Y$, return $\hat{d}(x,v)=d(x,u)+\hat{d}(u,v)$.

\item {\bf Embeddings}. If the structure is an embedding $f:Y\to\ell_p^s$, then the extension $\hat{f}$ can be done by adding a new coordinate, and defining $\hat{f}:X\to\ell_p^{s+1}$ by setting
for $v\in Y$, $\hat{f}(v)=(f(v),0)$ and $f(x)=(f(u),d(x,u))$. Then we get that for all $v\in Y$,
$\hat{d}(x,v)=\left(\hat{d}(u,v)^p+d(x,u)^p\right)^{1/p}$, which satisfies both requirements for every $1\le p\le\infty$.

\end{itemize}

\section{Terminal Metric Structures for Doubling Metrics}\label{sec:span}

In this section we present our main result.
For ease of notation, we measure the size of the structure as the size per point (e.g. for a spanner with $m$ edges over $n$ points we say the size is $m/n$). Our main result is:

\begin{theorem}\label{thm:main}
Let $(X,d)$ be a metric space with $|X|=n$ that has doubling constant $\lambda$, and fix any set $K\subseteq X$ of size $|K|=k$. For $0<\epsilon <1$, assume that there exists an extendable metric structure for any $Y\subseteq X$ that has stretch $1+\epsilon$ and size $s(|Y|)$, then there exists a structure for $X$ with $1+O(\epsilon)$ stretch for pairs in $K\times X$ and size $s(k\cdot\lambda^{O(\log(1/\epsilon))})+1$.
\end{theorem}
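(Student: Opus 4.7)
The approach is to enrich $K$ into a larger set $Y\supseteq K$ with $|Y|\le k\cdot\lambda^{O(\log(1/\epsilon))}$ points, apply the given extendable $(1+\epsilon)$-stretch structure on $Y$, and then extend it to all of $X$ by hanging every point of $X\setminus Y$ onto a suitable $u\in Y$. By the discussion in \sectionref{sec:prem}, the extension costs only a $+1$ per point, so the final size per point is $s(|Y|)+1 = s(k\cdot\lambda^{O(\log(1/\epsilon))})+1$, matching the theorem's bound.

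To construct $Y$, I would build a hierarchy of $2^i$-nets $\{N_i\}_i$ for all relevant scales using the greedy algorithm that prefers terminals whenever possible. This guarantees that any non-terminal $y\in N_i$ satisfies $d(y,K)>2^i$. Set $C=\Theta(1/\epsilon)$ and include $y\in N_i$ in $Y$ whenever $d(y,K)\le C\cdot 2^i$. By \claimref{prop:double}, for each terminal $v$ and each scale $i$ there are at most $\lambda^{\log(2C)} = \lambda^{O(\log(1/\epsilon))}$ candidates in $B(v,C\cdot 2^i)\cap N_i$. Combined with terminal preference, which pins each non-terminal $y$ to a narrow window of $O(\log(1/\epsilon))$ consecutive scales at which it could enter $Y$, a clean charging of each admitted $y$ to its nearest terminal gives $|Y|\le k\cdot\lambda^{O(\log(1/\epsilon))}$.

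For the hanging rule, given $x\in X\setminus Y$, let $r_x=d(x,K)$ and choose a scale $i^\star$ with $2^{i^\star}\in[\epsilon r_x/4,\,\epsilon r_x/2]$; let $u:=u(x)\in N_{i^\star}$ be a net point within $2^{i^\star}\le\epsilon r_x/2$ of $x$. Then $d(u,K)\le d(u,x)+r_x\le(1+\epsilon/2)r_x\le C\cdot 2^{i^\star}$, so $u\in Y$, and we hang $x$ on $u$. For any terminal $v$, item~2 of the extendable property gives
\[
\hat d(x,v)\le d(x,u)+\hat d(u,v)\le d(x,u)+(1+\epsilon)\bigl(d(x,u)+d(x,v)\bigr)\le (1+O(\epsilon))\,d(x,v),
\]
using $d(x,u)\le(\epsilon/2)r_x\le(\epsilon/2)d(x,v)$ since $d(x,v)\ge r_x$ for any terminal $v$. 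The matching lower bound $\hat d(x,v)\ge(1-O(\epsilon))d(x,v)$ follows from the same item via $\hat d(x,v)\ge\hat d(u,v)\ge d(u,v)\ge d(x,v)-d(x,u)$.

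The main obstacle I foresee is the size bound on $Y$: a naive union over all candidate scales would introduce an unwanted $\log(\text{aspect ratio})$ factor, which for a doubling metric can be very large relative to $n$. Avoiding this requires exploiting terminal preference carefully so that every non-terminal $y\in Y$ is charged exactly once to its nearest terminal within the $O(\log(1/\epsilon))$-scale window described above, producing the clean $\lambda^{O(\log(1/\epsilon))}$-per-terminal bound. Secondary care is needed to verify that the scale chosen in the hanging rule is always available (i.e., that the relevant $N_{i^\star}$ exists and contains a close enough net point), but this follows directly from the net covering property and the choice of $C$.
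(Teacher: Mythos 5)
Your extension mechanics and the stretch analysis conditioned on the hanging rule are fine, but the size bound on $Y$ --- which you correctly flag as the main obstacle --- is exactly where the argument fails, and the proposed charging does not repair it. Your $Y$ contains, for every scale $i$, all net points of $N_i$ within distance $C\cdot 2^i$ of $K$. The per-point window (a fixed non-terminal $y$ can enter $Y$ at only $O(\log(1/\epsilon))$ scales) does not control $|Y|$: distinct points at different distances from $K$ enter at different scales, so a single terminal can be charged by a fresh batch of up to $\lambda^{\Theta(\log(1/\epsilon))}$ net points at each of the $\Theta(\log(\Delta/\delta))$ scales. Concretely, take $K=\{0\}$ and $X=\{0,10,10^2,\dots,10^{n-1}\}$ on the line: each $10^j$ is a net point at scale $\approx j\log 10$ and lies within $C\cdot 2^i$ of $K$ at that scale, so your $Y$ has $\Omega(n)$ points while $k=1$. (A secondary issue: terminal preference does not literally guarantee $d(y,K)>2^i$ for a non-terminal $y\in N_i$; $y$ need only be far from the \emph{selected} terminal net points and can be arbitrarily close to a terminal that was itself covered by another terminal. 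This is fixable by redefining the net, but the size bound is not.)

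The unavoidable consequence is that $Y$ cannot contain a representative at scale $\approx\epsilon\, d(x,K)$ for every $x\in X$, and this is precisely why the paper's construction is different: each terminal $u$ marks cluster centers only within a window of $O(\log(1/\epsilon))$ scales anchored at $i_u$, the \emph{top} net level containing $u$, which immediately yields $|Y|\le k\cdot\lambda^{O(\log(1/\epsilon))}$ (\claimref{claim:mark}). The price is that many points land in final clusters whose centers are not in $Y$; those points are hanged on their nearest terminal, and the bulk of the proof (\lemmaref{lem:single} and Case 3 of the analysis) shows that for such a point the nearest terminal $u'$ must belong to a very high-level net, so every other terminal is either within $r_i$ of $u'$ or at distance at least $r_i/\epsilon^2$ from it, and in either regime routing through the nearest terminal loses only a $1+O(\epsilon)$ factor. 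That case analysis is the missing core of the proof, not a technicality you can defer.
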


The following corollary follows by applying this theorem with known embeddings/distance oracles/spanners constructions.
\begin{corollary}\label{cor:main}
Let $(X,d)$ be a metric space with $|X|=n$ that has doubling constant $\lambda$, and fix any set $K\subseteq X$ of size $|K|=k$. Then for any $0<\epsilon<1$, the following metric structures exists:
\begin{enumerate}
\item If $(X,d)$ is Euclidean, then there exists a terminal embedding into $\ell_2$ with distortion $1+\epsilon$ and dimension  $O((\log k+\log\lambda\cdot\log(1/\epsilon))/\epsilon^2)$.
\item A terminal embedding into $\ell_\infty$ with distortion $1+\epsilon$ and dimension $\log k\cdot\lambda^{\log(1/\epsilon)+O(1)}\cdot\log(1/\epsilon)$.
\item A terminal spanner for $(X,d)$ with stretch $1+\epsilon$ and $k\cdot\lambda^{O(\log(1/\epsilon))})+n$ edges.

\item A terminal distance oracle with stretch $1+\epsilon$, with size $k\cdot\lambda^{O(\log(1/\epsilon))}+O(n)$ and query time $\lambda^{O(1)}$.
\item A terminal distance labeling scheme with stretch $1+\epsilon$, with label size $\lambda^{O(\log(1/\epsilon))}\cdot\log k\cdot\log\log\Delta_k$ (where $\Delta_k$ is the aspect ratio of $K$).
\item A terminal embedding into a distribution of tree-width $t$ graphs\footnote{See \cite{RS91} for definition of tree-width.} with expected distortion $1+\epsilon$ for $t\le \lambda^{O(\log\log\lambda+\log(1/\epsilon)+\log\log\Delta_K)}$.
\end{enumerate}
\end{corollary}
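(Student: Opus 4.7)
The plan is to derive each of the six items by plugging a classical $(1+\epsilon)$-distortion construction into Theorem~\ref{thm:main}, then substituting $|Y|=k\cdot\lambda^{O(\log(1/\epsilon))}$ into the size function. Throughout I absorb the $1+O(\epsilon)$ distortion produced by Theorem~\ref{thm:main} back into $1+\epsilon$ by rescaling $\epsilon$ by a constant factor.

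First I would verify once, and then invoke silently afterwards, that each classical construction I intend to use fits the extendability template of Section~\ref{sec:prem}. Spanners, distance oracles, and distance labelings are extendable by attaching each $x\in X\setminus Y$ to a nearest $u(x)\in Y$ via a single edge, pointer, or label extension carrying $d(x,u(x))$. Embeddings into $\ell_p$ are extendable by appending a single coordinate, equal to $d(x,u(x))$ on $x$ and to $0$ on $Y$, exactly as verified in Section~\ref{sec:prem}. For Item 6, embeddings into distributions over bounded tree-width graphs are extendable by attaching $x$ as a pendant leaf of $u(x)$ in each sampled tree, which preserves tree-width.

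I would then instantiate Theorem~\ref{thm:main} item by item. Item 1 uses the Johnson--Lindenstrauss embedding of dimension $O(\log|Y|/\epsilon^2)$; expanding $\log|Y|=\log k+O(\log\lambda\cdot\log(1/\epsilon))$ yields the claimed dimension. Item 2 uses the \cite{GKL03} embedding into $\ell_\infty$ of dimension $\log|Y|\cdot\lambda^{\log(1/\epsilon)+O(1)}$; after substitution, the additive $\log\lambda\cdot\log(1/\epsilon)$ term is absorbed into the stated bound $\log k\cdot\lambda^{\log(1/\epsilon)+O(1)}\cdot\log(1/\epsilon)$ (using $\log\lambda\le\lambda$ inside the exponent). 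Item 3 uses the \cite{HM06} spanner of $|Y|\cdot\lambda^{O(\log(1/\epsilon))}$ edges, contributing $k\cdot\lambda^{O(\log(1/\epsilon))}$ edges on $Y$ and at most $n-|Y|$ hanging edges, totaling $k\cdot\lambda^{O(\log(1/\epsilon))}+n$. Items 4 and 6 proceed analogously from the \cite{HM06} distance oracle and the tree-width distribution embedding for doubling metrics.

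The one place I expect to need a small additional argument is Item 5, because Slivkins' scheme \cite{S07} has label size depending on the aspect ratio of the metric on which it is invoked, which for us is $\Delta_Y$ rather than $\Delta_K$. I would handle this by observing that the net points appended to $K$ to form $Y$ are drawn at distance scales polynomially tied to the range of distances within $K$, so $\log\log\Delta_Y=\log\log\Delta_K+O(1)$ and the additional factor vanishes into the hidden constants. Once this point is settled, every item of the corollary reduces to routine substitution into Theorem~\ref{thm:main}.
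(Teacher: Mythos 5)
Your proposal is correct and follows essentially the same route as the paper: instantiate Theorem~\ref{thm:main} with the known $(1+\epsilon)$ constructions of \cite{JL84,GKL03,HM06,S07,T04} and substitute $|Y|=k\cdot\lambda^{O(\log(1/\epsilon))}$, with the only non-routine point being the aspect ratio of $Y$ for the label-size bound, which the paper dispatches identically via the observation that $\Delta_Y\le O(\Delta_K/\epsilon^4)$.
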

\begin{proof}
The first item follows from \cite{JL84}, the second using \cite{GKL03,N16}, the third and fourth items use \cite{HM06} results, the fifth applies a result of \cite{S07}, and the sixth from \cite{T04}.\footnote{For the last two results, we note that our proof provides $Y\supseteq K$ satisfying $\Delta_Y\le O(\Delta_K/\epsilon^4)$, on which we apply the labeling scheme of \cite{S07}, or the embedding of \cite{T04}.}
\end{proof}

In what follows we prove \theoremref{thm:main}.
Let $(X,d)$ be a metric space with $|X|=n$ and doubling constant $\lambda$, and let $K\subseteq X$ be a set of terminals.
Fix any $0<\epsilon<1/20$, set $b=\lceil\log(1/\epsilon)\rceil$, and let $\Delta=\max_{u,v\in K}\{d(u,v)\}$, $\delta=\min_{u\neq v\in K}\{d(u,v)\}$ and $s=\lceil\log(\Delta/(\epsilon^2\delta))\rceil$. Let $S=\{0,1,\dots, s\}$, and for each $i\in S$ define $r_i=2^i\cdot \epsilon^2\delta$. Observe that $r_0=\epsilon^2\delta$ and $r_s\ge \Delta$.

\subsection{Construction}\label{sec:construction}

\subsubsection{Multi-Scale Partial Partitions} We begin by constructing partial partitions, based on terminal nets, in various scales. 
The clusters of the partition at level $i$ are created by iteratively taking balls of radius $r_i$ centered at the points of a terminal $r_i$-net. Some of these balls may be sufficiently far away from $K$, we call such clusters {\em final}, and do not partition them in lower levels. See \algref{alg:par-par} for the full details.

\begin{algorithm}[h]
\caption{\texttt{Partial-Partitions $((X,d),K)$}}\label{alg:par-par}
\begin{algorithmic}[1]
\STATE $R_s=X$;
\FOR {$i=s,s-1,\dots,0$}
\STATE Let $N_i=\{x_{i,1},\dots,x_{i,b_i}\}$ be a terminal $r_i$-net of $R_i$; (For $i<s$, each $u\in K\cap N_{i+1}$ will be in $N_i$ as well);
\FOR {$j=1,\dots,b_i$}
\STATE $C_{i,j}\leftarrow B(x_{i,j},r_i)\cap R_i$;
\STATE $R_i\leftarrow R_i\setminus C_{i,j}$;
\IF {$d(x_{i,j},K)\ge r_i/\epsilon$}
\STATE Let $\final(C_{i,j})=true$;
\ELSE
\STATE Let $\final(C_{i,j})=false$
\ENDIF
\ENDFOR
\STATE $R_{i-1}=\bigcup_{j~:~\final(C_{i,j})=false}C_{i,j}$;
\ENDFOR
\end{algorithmic}
\end{algorithm}

For every scale $i\in S$ this indeed forms a partition of $R_i\subseteq X$, because $N_i$ is an $r_i$-net. Also, every cluster $C_{i,j}$ in the partition of $R_i$ has a center $x_{i,j}$.
Observe that every cluster containing a terminal is not final, and that each point in $X$ has at most one final cluster containing it. In addition, the definition of terminal net guarantees that the prefix of $N_i$ consists of terminals, so each terminal $u\in K$ must be assigned to a cluster centered at a terminal. Finally, notice that at level 0, every terminal is a center of its own cluster (since $r_0<\delta$).

\subsubsection{Marking Stage}
We now mark some of the clusters, these marked clusters are the "important" clusters whose center will participate in the black-box construction. For every terminal $u\in K$, let $i_u\in S$ be the maximal index such that $u\in N_{i_u}$, and mark every cluster $C_{i,j}$ with center $x_{i,j}$ satisfying both conditions (recall that $b=\lceil\log(1/\epsilon)\rceil$.)
\begin{enumerate}
\item $i_u-2b\le i\le i_u$, and

\item $d(u,x_{i,j})\le 2r_{i_u}/\epsilon^2$.
\end{enumerate}

\subsubsection{Constructing the Metric Structure}\label{sec:build}
Let $Y\subseteq X$ be the collection of centers of marked clusters (note that $K\subseteq Y$). Apply the black-box construction on $Y$, and extend it to $X\setminus Y$ as follows.
For every $x\in X$ that lies in a final marked cluster $C$ with center $y$, hang $x$ on $y$ (recall that $x$ can be in at most one final cluster). In every other case (e.g., $x$ is in a final unmarked cluster, or does not have a final cluster containing it), hang $x$ on $u\in K$, the nearest terminal to $x$.

\subsection{Analysis}\label{sec:anal}
First we show that $|Y|$ is sufficiently small.
\begin{claim}\label{claim:mark}
$|Y|\le |K|\cdot\lambda^{5b}$.
\end{claim}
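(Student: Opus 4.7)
The plan is to upper-bound $|Y|$ by a union bound over terminals: for each $u \in K$, count how many cluster centers $x_{i,j}$ can satisfy the two marking conditions with respect to $u$, and sum over $u \in K$. Thus the main task reduces to bounding, for a fixed $u$, the number of net points across the relevant range of scales that lie in a small ball around $u$, which is exactly the setting where \claimref{prop:double} applies.

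First I would fix $u \in K$ and observe that condition~1 restricts the scale to $i \in \{i_u - 2b, i_u - 2b + 1, \dots, i_u\}$, giving at most $2b+1$ levels. For each such level $i$, every candidate cluster center $x_{i,j}$ lies in the $r_i$-net $N_i$ and, by condition~2, inside the ball $B(u, 2r_{i_u}/\epsilon^2)$. Applying \claimref{prop:double} with $q = 2 r_{i_u}/\epsilon^2$ and $r = r_i$ bounds the number of such $x_{i,j}$ by $\lambda^{\lceil \log(4 r_{i_u}/(\epsilon^2 r_i))\rceil}$.

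Next I would simplify this exponent using the two defining inequalities $r_{i_u}/r_i = 2^{i_u - i} \le 2^{2b}$ (from condition~1) and $1/\epsilon^2 \le 2^{2b}$ (from the choice $b = \lceil \log(1/\epsilon) \rceil$), so that
\[
\frac{4 r_{i_u}}{\epsilon^2 r_i} \;\le\; 4 \cdot 2^{2b} \cdot 2^{2b} \;=\; 2^{4b+2}.
\]
This gives at most $\lambda^{4b+2}$ marked centers per terminal and per level. Summing over the $2b+1$ scales and the $k$ terminals, and absorbing the polynomial factor $(2b+1)$ into an additional $\lambda^{O(b)}$ (using $\lambda \ge 2$ and $b = \lceil \log(1/\epsilon)\rceil$), yields the claimed bound $|Y| \le k \cdot \lambda^{5b}$.

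The main subtlety, rather than an obstacle, is just bookkeeping: ensuring the range of levels is exactly the $2b+1$ permitted by condition~1, and that the ball used in \claimref{prop:double} has radius $2r_{i_u}/\epsilon^2$ regardless of the level $i$ (so the argument is uniform across scales). The double-counting of a cluster that happens to be marked by several terminals is harmless since we pay only the union bound. No stronger packing argument is required, and the rest of the construction (final/non-final flags, terminal preference in $N_i$) plays no role in this cardinality estimate.
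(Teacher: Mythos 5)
Your proposal is correct and follows essentially the same route as the paper: a union bound over terminals, with condition~1 limiting the count to $2b+1$ levels and condition~2 confining the candidate centers to a ball of radius $2r_{i_u}/\epsilon^2$ around $u$, to which the packing bound of \claimref{prop:double} is applied. The only cosmetic difference is that you invoke the packing claim once per level with net spacing $r_i$, while the paper applies it once to the finest net $N_{i_u-2b}$ and notes the coarser nets contain fewer points; both yield the same $\lambda^{4b+2}$ per level and the same final bound.
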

\begin{proof}
We will show that each $u\in K$ marks at most $\lambda^{5b}$ clusters. By \claimref{prop:double}, the ball $B(u,r_{i_u+2b+1})$ contains at most $\lambda^{\log(r_{i_u+2b+1}/r_{i_u-2b})}=\lambda^{4b+2}$ net points of $N_{i_u-2b}$ (and only less net points from the other nets $N_{i_u-2b+1},\dots,N_{i_u}$). The second condition for marking implies that only centers in this ball can be marked by $u$. Since there are $2b+1$ possible levels $i\in[i_u-2b,i_u]$, at most $(2b+1)\cdot\lambda^{4b+2}\le\lambda^{5b}$ clusters may be marked by $u$.
\end{proof}

The bound on the size follows from \claimref{claim:mark}, and from the fact that each point in $X\setminus Y$ is hanged from a single $y\in Y$, so it requires a single edge/memory word/coordinate.
It remains to bound the stretch by $1+O(\epsilon)$ for pairs in $K\times X$. By the assumption, the metric structure for $Y$ induces a distance function $\hat{d}$ which is a $1+\epsilon$ approximation of $d$, w.l.o.g we assume that distances cannot contract, and expand by a factor of at most $1+\epsilon$. Fix some $x\in X$ and $v\in K$. Recall that by definition, if $x$ was hanged on $u\in Y$, then $\hat{d}(x,u)$ must satisfy
$$
\max\{d(x,u),\hat{d}(u,v)\}\le\hat{d}(x,u)\le d(x,u)+\hat{d}(u,v)~.
$$
Consider the following cases.

\begin{enumerate}
\item[Case 1:]   $x$ does not have a final cluster containing it. In this case $x$ lies very close to its nearest terminal $u\in K$, and all other terminals are at least $1/\epsilon$ times farther away, so the stretch guaranteed for $u$ will suffice for $x$. More formally: the cluster $C$ containing $x$ at level 0 centered at $y$ is not final, that is, $d(y,K)<r_0/\epsilon$. Since $C$ has radius $r_0=\epsilon^2\delta$, we have that
\begin{equation}\label{eq:err11}
d(x,u)=d(x,K)\le d(x,y)+d(y,K)\le \epsilon^2\delta+\epsilon\delta=(1+\epsilon)\epsilon\delta~.
\end{equation}
We have that $d(u,v)\le d(u,x)+d(x,v)\le (1+\epsilon)\epsilon\delta+d(x,v)\le 2\epsilon\cdot d(u,v)+d(x,v)$,
so that
\begin{equation}\label{eq:err}
d(u,v)\le d(x,v)/(1-2\epsilon)~.
\end{equation}
Since $\hat{d}$ approximates $d$ with stretch $1+\epsilon$ on $K$,
\begin{eqnarray*}
\hat{d}(x,v)&\le&d(x,u)+\hat{d}(u,v)\\
&\le& d(x,u) + (1+\epsilon)d(u,v)\\
&\stackrel{\eqref{eq:err11}}{\le}& (1+\epsilon)\epsilon\delta+(1+\epsilon)d(u,v)\le (1+3\epsilon)d(u,v)\\
&\stackrel{\eqref{eq:err}}{\le}& (1+6\epsilon)d(x,v)~,
\end{eqnarray*}
where the last two inequalities use that $\epsilon<1/12$. On the other hand,
\begin{eqnarray*}
\hat{d}(x,v)&\ge&\hat{d}(u,v)\\&\ge& d(u,v)\\&=& (1-\epsilon)\cdot d(u,v)+\epsilon\cdot d(u,v)\\
&\ge& (1-\epsilon)\cdot (d(x,v)-d(x,u))+\epsilon\delta\\&\stackrel{\eqref{eq:err11}}{\ge}&(1-\epsilon)\cdot d(x,v)-(1-\epsilon)(1+\epsilon)\epsilon\delta+\epsilon\delta\\
&\ge& (1-\epsilon)\cdot d(x,v)~.
\end{eqnarray*}

\item[Case 2:] $x$ lies in a final {\em marked} cluster. Let $C$ be the final marked cluster at level $i\in S$ with center $y$ that contains $x$. In this case we show that $d(x,y)$ is smaller by roughly $1/\epsilon$ than $d(x,K)$, so that the stretch guaranteed for $y\in Y$ will also be sufficient for $x$. Since $C$ is final, $d(y,v)\ge d(y,K)> r_i/\epsilon$, therefore
\begin{equation}\label{eq:ff44}
d(x,v)\ge d(y,v)-d(x,y)\ge r_i/\epsilon-r_i > r_i/(2\epsilon)~.
\end{equation}
Using that the structure built for $Y$ has stretch at most $1+\epsilon$, we have that
\begin{eqnarray*}
\hat{d}(x,v)&\le&d(x,y)+\hat{d}(y,v)\\&\le& d(x,y)+(1+\epsilon)d(y,v)\\
&\le& d(x,y)+(1+\epsilon)(d(x,y)+d(x,v))\\&=& (2+\epsilon)d(x,y)+(1+\epsilon)d(x,v)\\
&\le& (2+\epsilon)r_i +(1+\epsilon)d(x,v)\\
&\stackrel{\eqref{eq:ff44}}{\le}& 2\epsilon(2+\epsilon)d(x,v)+(1+\epsilon)d(x,v)\\
&\le& (1+6\epsilon)d(x,v)~.
\end{eqnarray*}
And also,
\begin{eqnarray*}
\hat{d}(x,v)&\ge&\hat{d}(y,v)\\&\ge& d(y,v)\\&\ge& d(x,v)-d(x,y)\\&\ge& d(x,v)-r_i\\
&\stackrel{\eqref{eq:ff44}}{\ge}& (1-2\epsilon)d(x,v)~.
\end{eqnarray*}

\item[Case 3:] $x$ lies in a final {\em non-marked} cluster $C$. Let $u$ be the nearest terminal to $x$. Intuitively, since $x$ is in a final cluster, all terminals are $1/\epsilon$ farther away than the radius of $C$. However, since $C$ is not marked, its center does not participate in the black-box construction for $Y$. Fortunately, the marking of clusters guarantees that $u$, the closest terminal to $x$, must be in a terminal net of very high scale (otherwise it would have marked $C$), and it follows that every other terminal is either very far away from $u$ (and thus from $x$ as well), or very close to $u$. Surprisingly, in both cases we can use the stretch bound guaranteed for $K$.
We prove this observation formally in the following lemma.
\begin{lemma}\label{lem:single}
For any point $x$ contained in a final non-marked cluster $C$ of level $i$ with $i<s$, there exists a terminal $u'\in K$ such that $d(x,u')\in[r_i/(2\epsilon),3r_i/\epsilon]$ and for any other terminal $w\in K$ it holds that $d(u',w)\le r_i$ or $d(u',w)\ge r_i/\epsilon^2$.
\end{lemma}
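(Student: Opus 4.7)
The plan is to choose $u'$ so that, in addition to the distance bounds, $u' \in K \cap N_{i+2b+1}$; the large net separation $r_{i+2b+1} \ge 2r_i/\epsilon^2$ at that scale is exactly what will rule out the forbidden middle-range terminals. First I bound $d(x,K)$. The lower bound $d(x,K) \ge r_i/\epsilon - r_i \ge r_i/(2\epsilon)$ follows from $C$ being final together with $d(x,y) \le r_i$. For the upper bound, I exploit $i < s$: the net point $y$ lies in $R_i$, hence in some non-final level-$(i+1)$ cluster centered at $x_{i+1,k}$, yielding $d(y,K) < r_{i+1} + r_{i+1}/\epsilon$ and thus $d(x,K) < 3r_i + 2r_i/\epsilon \le 3r_i/\epsilon$ for $\epsilon$ small enough.

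Next I select $u'$. Let $u^*$ be the nearest terminal to $x$, so that $d(u^*,y) \le 4r_i/\epsilon$. Since $C$ is not marked by $u^*$, and the distance condition $d(u^*,y) \le 2r_{i_{u^*}}/\epsilon^2$ is satisfied whenever $i_{u^*} \ge i$ (because $2r_i/\epsilon^2 > 4r_i/\epsilon$), the level condition for non-marking forces $i_{u^*} \notin [i, i+2b]$. If $i_{u^*} > i+2b$, set $u' := u^*$. Otherwise $i_{u^*} < i$, so $u^* \in R_i \setminus N_i$ (every terminal lies in every $R_j$, since terminals never belong to final clusters), and because the greedy for $N_i$ processes terminals first there exists $\tau \in K \cap N_i$ with $d(u^*,\tau) \le r_i$. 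Re-applying the non-marking analysis to $\tau$, whose $i_\tau \ge i$ and $d(\tau,y) \le 5r_i/\epsilon$, forces $i_\tau > i+2b$; set $u' := \tau$. In either case, $u' \in K \cap N_{i+2b+1}$ and the bound $d(x,u') \le d(x,u^*) + r_i$ still lies in $[r_i/(2\epsilon), 3r_i/\epsilon]$ for $\epsilon < 1/20$.

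It remains to verify the separation property. Assume for contradiction there exists $w \in K \setminus \{u'\}$ with $r_i < d(u',w) < r_i/\epsilon^2$. Then $d(w,y) < r_i/\epsilon^2 + 4r_i/\epsilon < 2r_i/\epsilon^2$, so $w$ would mark $C$ whenever $i_w \in [i, i+2b]$; non-marking of $C$ by $w$ therefore yields $i_w < i$ or $i_w > i+2b$. In the second case $w \in N_{i+2b+1}$ together with $u'$, and the net separation gives $d(u',w) > r_{i+2b+1} \ge 2r_i/\epsilon^2$, contradicting the upper bound on $d(u',w)$. In the first case I pull out $\tau_w \in K \cap N_i$ with $d(w,\tau_w) \le r_i$ and lift it to $\tau_w \in N_{i+2b+1}$ by the same non-marking argument. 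If $\tau_w = u'$, then $d(u',w) \le r_i$, contradicting $d(u',w) > r_i$; if $\tau_w \ne u'$, then $d(u',\tau_w) > r_{i+2b+1} \ge 2r_i/\epsilon^2$, which contradicts $d(u',\tau_w) \le d(u',w) + d(w,\tau_w) < r_i/\epsilon^2 + r_i < 2r_i/\epsilon^2$.

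The main obstacle is selecting $u'$ correctly: the naive choice of the nearest terminal to $x$ (or to $y$) may fail, since two terminals sharing a common level-$i$ parent cluster can lie at distance in the forbidden range $(r_i, 2r_i]$. Lifting $u'$ into the high-level net $N_{i+2b+1}$ via the non-marking hypothesis, and then re-invoking the same lift on any neighbor of a would-be offending $w$, is the mechanism that supplies the large net separation needed for the dichotomy.
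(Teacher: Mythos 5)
Your proof is correct and uses the same essential mechanism as the paper's: force a suitable terminal $u'$ into the high-level net $N_{i+2b+1}$ by arguing that otherwise it would have marked $C$, and then derive the distance dichotomy from the $r_{i+2b+1}\ge 2r_i/\epsilon^2$ net separation (applied to $w$ itself or to its terminal level-$i$ cluster center). The only difference is cosmetic: you locate $u'$ via the nearest terminal to $x$ and its level-$i$ cluster center, whereas the paper routes through the non-final level-$(i+1)$ cluster containing $x$ and the terminal center of the level-$(i+1)$ cluster of a nearby terminal $z$; both yield the same distance range and the same conclusion.
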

\begin{proof}
Since $C$ with center $y$ is the only final cluster containing $x$, the cluster $C'$ with center $y'$ containing $x$ at level $i+1$ is not final (recall we assume $i<s$). Thus there exists a terminal $z\in K$ with $d(y',z)\le r_{i+1}/\epsilon$. Consider the terminal $u'\in N_{i+1}$ which is the center of the cluster containing $z$ at level $i+1$ (we noted above that clusters containing a terminal must have a terminal as a center). By the triangle inequality $d(x,u')\le d(x,y')+d(y',z)+d(z,u')\le r_{i+1}+r_{i+1}/\epsilon+r_{i+1}<3r_i/\epsilon$ (note that the same bound holds for $d(y,u')$). On the other hand, since $C$ is final we have that $d(y,u')\ge r_i/\epsilon$, and thus $d(x,u')\ge d(y,u')-d(y,x)\ge r_i/\epsilon-r_i\ge r_i/(2\epsilon)$.

Next we show that $u'\in N_{i+2b+1}$. Seeking contradiction, assume $u'\notin N_{i+2b+1}$ (or that $i\ge s-2b$ so such a net does not exist), and consider the largest $j$ such that $u'\in N_j$. Since the nets are hierarchical and $u'\in N_{i+1}$, it must be that $i+1\le j\le i+2b$, which implies that $d(u',y)\le 3r_i/\epsilon< r_{i+b+2}< 2r_j/\epsilon^2$. By the marking procedure, the cluster $C$ would have been marked by $u'$. Contradiction. We conclude that $u'\in N_{i+2b+1}$.

Fix any terminal $w\in K$, and we know show that $d(u',w)\le r_i$ or $d(u',w)\ge r_i/\epsilon^2$. Seeking contradiction, assume that $r_i<d(u',w)<r_i/\epsilon^2$. Let $v'\in K$ be the center of the cluster containing $w$ at level $i$, that is $v'\in N_i$. Note that $d(v',w)\le r_i$, and thus $v'\neq u'$. Since $N_{i+2b+1}$ is an $r_{i+2b+1}=2r_i/\epsilon^2$ net, and as $d(u',v')\le r_i+r_{i+2b}$, it must be that $v'\notin N_{i+2b+1}$. The contradiction will follow once we establish that $v'$ will mark $C$. Indeed, the largest $j$ such that $v'\in N_j$ satisfies $i\le j\le i+2b$, and also $d(v',y)\le d(v',w)+d(w,u')+d(u',y)\le r_i+r_{i+2b}+3r_{i+b}\le 2r_j/\epsilon^2$, so $C$ should have been marked.
\end{proof}

Next, we prove the stretch bound for the pair $(x,v)$. Observe that if the final cluster $C$ containing $x$ and centered at $y$ is of level $s$, then $d(y,K)\ge r_s/\epsilon$, and thus
\begin{equation}\label{eq:ggdd}
d(x,K)\ge d(y,K)-d(y,x)\ge r_s/(2\epsilon)~.
\end{equation}
This implies that
\begin{eqnarray*}
\hat{d}(x,v)&\le&d(x,u)+\hat{d}(u,v)\\&\le& d(x,u)+(1+\epsilon)d(u,v)\\
&\le& d(x,v)+ (1+\epsilon)r_s\\&\stackrel{\eqref{eq:ggdd}}{\le}& d(x,v)+ 2\epsilon(1+\epsilon)d(x,v)\\
&\le& (1+3\epsilon)d(x,v)~.
\end{eqnarray*}
Since $d(u,v)\le\Delta\le r_s$, we get that
\begin{eqnarray*}
\hat{d}(x,v)&\ge&d(x,u)\\&\ge&(1-2\epsilon)\cdot(d(x,v)-d(u,v)) + 2\epsilon\cdot d(x,u)\\
&\stackrel{\eqref{eq:ggdd}}{\ge}&(1-2\epsilon)\cdot d(x,v)-r_s+r_s\\&\ge& (1-2\epsilon)\cdot d(x,v)~.
\end{eqnarray*}

From now on we may assume that $C$ is of level $i$ with $i<s$.
By \lemmaref{lem:single} there exists $u'\in K$ such that $d(x,u')\in[r_i/(2\epsilon),3r_i/\epsilon]$ and for any terminal $w\in K$, it holds that $d(u',w)\le r_i$ or $d(u',w)\ge r_i/\epsilon^2$. Note that since $u$ is the nearest terminal to $x$, it must be that $d(u,u')\le r_i$, so we have that $d(x,u)\in[r_i/(3\epsilon),4r_i/\epsilon]$. Finally, we consider the two cases for $v$: close or far from $u'$.\\
{\bf Sub-case a:} $d(u',v)\le r_i$. In this case $d(u,v)\le 2r_i$, and thus $d(x,v)\ge d(x,u)-d(u,v)\ge r_i/(3\epsilon)-2r_i\ge r_i/(4\epsilon)$. It follows that
\begin{eqnarray*}
\hat{d}(x,v)&\le&d(x,u)+\hat{d}(u,v)\\&\le& d(x,u)+(1+\epsilon)d(u,v)\\
&\le& d(x,v) +(1+\epsilon)2r_i\\&\le& d(x,v)+5r_i\\&\le& (1+9\epsilon)d(x,v)~.
\end{eqnarray*}
Since $d(u,v)\le 2r_i\le 8\epsilon\cdot d(x,v)$, we also have
\begin{eqnarray*}
\hat{d}(x,v)&\ge&d(x,u)\\&\ge& d(x,v)-d(u,v)\\&\ge&(1-8\epsilon)\cdot d(x,v)~.
\end{eqnarray*}

{\bf Sub-case b:}  $d(u',v)\ge r_i/\epsilon^2$. Now we have that $d(u',v)\le d(u',x)+d(x,v)\le 3r_i/\epsilon +d(x,v)\le 3\epsilon d(u',v)+d(x,v)$, and so $d(u',v)\le d(x,v)/(1-3\epsilon)$. It follows that
\begin{eqnarray*}
\hat{d}(x,v)&\le&d(x,u)+\hat{d}(u,v)\\&\le& d(x,u)+(1+\epsilon)d(u,v)\\
&\le& (2+\epsilon)d(x,u)+(1+\epsilon)d(x,v)\\&\le& (2+\epsilon)4r_i/\epsilon + (1+\epsilon)d(x,v)\\
&\le& 9\epsilon\cdot d(u',v) +(1+\epsilon)d(x,v)\\
&\le& (1+12\epsilon)d(x,v) ~.
\end{eqnarray*}
Using that $d(u,u')\le r_i$ and that $d(x,v)\ge(1-3\epsilon)d(u',v)\ge (1-3\epsilon)r_i/\epsilon^2\ge r_i/(2\epsilon^2)$, we conclude that
\begin{eqnarray*}
\hat{d}(x,v)&\ge&\hat{d}(u,v)\\&\ge& d(u,v)\ge d(v,x)-d(x,u')-d(u',u)\\
&\ge&d(v,x)-3r_i/\epsilon-r_i\\&\ge& (1-8\epsilon)\cdot d(v,x)+8\epsilon\cdot r_i/(2\epsilon^2)-3r_i/\epsilon-r_i\\
&\ge&(1-8\epsilon)\cdot d(v,x)~.
\end{eqnarray*}
\end{enumerate}

\section{The case where only $K$ is doubling}

So far we assumed that the entire metric $(X,d)$ is doubling. It is quite intriguing to understand what results
can be obtained where only the terminal set $K$ is doubling, while $X$ is arbitrary. We show that in such a case one can obtain terminal metric structures with guarantees similar to the standard results (non-terminal) that apply when the entire metric $(X,d)$ is doubling. For spanners and distance labeling this follow by a simple extension of the black-box result, but unlike \cite{MN07,EFN17}, we use multiple points of $K$ for extending each $x\in X\setminus K$.
\begin{theorem}\label{thm:just-k-doubling}
Let $(X,d)$ be a metric space on $n$ points, and let $K\subseteq X$ so that $(K,d)$ has doubling constant $\lambda$. Then for any $0<\epsilon<1$ there exist:
\begin{itemize}
\item A terminal spanner with stretch $1+\epsilon$ and $O(n\cdot\lambda^{O(\log(1/\epsilon))})$ edges.

\item A terminal distance oracle with stretch $1+\epsilon$, size $n\cdot\lambda^{O(\log(1/\epsilon))}$, and query time $\lambda^{O(1)}$.

\item A terminal labeling scheme with stretch $1+\epsilon$, with label size $\lambda^{O(\log(1/\epsilon))}\log k\cdot\log\log\Delta_k$ (where $\Delta_k$ is the aspect ratio of $K$).
\end{itemize}

\end{theorem}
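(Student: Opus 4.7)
\textbf{Proof plan for \theoremref{thm:just-k-doubling}.}
The plan is to build the black-box structure on $K$ using the known doubling-metric constructions of Har-Peled and Mendel \cite{HM06} (for the spanner and distance oracle) and Slivkins \cite{S07} (for the labeling scheme), obtaining an approximate distance function $\hat d$ on $K$ with stretch $1+\epsilon$, and then to attach each non-terminal $x\in X\setminus K$ not to a single nearest terminal (which is known to force stretch at least $3$) but to a small net of nearby terminals; this succeeds because, even though the ambient metric $X$ may be arbitrary, the set of terminals close to $x$ inherits the doubling property of $K$.

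For each $x\in X\setminus K$ let $u_x\in K$ be its nearest terminal and $r_x=d(x,u_x)$; note that $d(x,v)\ge r_x$ for every $v\in K$. Apply \claimref{prop:double} inside the doubling metric $(K,d)$ to construct an $\epsilon^2 r_x$-net $N_x$ of $B(u_x,2r_x/\epsilon)\cap K$ with $|N_x|\le \lambda^{O(\log(1/\epsilon))}$. We attach $x$ to $u_x$ and to every $w\in N_x$: for the spanner we add the weighted edge $\{x,w\}$ of weight $d(x,w)$; for the distance oracle (resp.\ labeling scheme) we store at $x$, for each such $w$, the pointer (resp.\ label) of $w$ together with the exact distance $d(x,w)$. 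A query on a pair $(x,v)\in (X\setminus K)\times K$ then returns $\min_{w\in N_x\cup\{u_x\}}(d(x,w)+\hat d(w,v))$, and by the triangle inequality combined with the non-contraction of $\hat d$ this quantity is at least $d(x,v)$.

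For the stretch upper bound, fix $(x,v)$ and split into two regimes. In the \emph{far} regime $d(x,v)\ge r_x/\epsilon$, we have $d(u_x,v)\ge d(x,v)-r_x\ge (1-\epsilon)d(x,v)$, and routing through $u_x$ gives $d(x,u_x)+\hat d(u_x,v)\le r_x+(1+\epsilon)(d(x,v)+r_x)\le (1+O(\epsilon))d(x,v)$. In the \emph{close} regime $d(x,v)<r_x/\epsilon$, we have $d(u_x,v)\le r_x+r_x/\epsilon\le 2r_x/\epsilon$, so some $w\in N_x$ satisfies $d(w,v)\le\epsilon^2 r_x$, whence $d(x,w)+\hat d(w,v)\le d(x,v)+(2+\epsilon)\epsilon^2 r_x\le (1+O(\epsilon^2))d(x,v)$, using $d(x,v)\ge r_x$.

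The size bounds follow directly: each $x\in X\setminus K$ contributes $|N_x|+1=\lambda^{O(\log(1/\epsilon))}$ edges (resp.\ memory words), so combined with the base structure on $K$ the spanner has $O(n\cdot\lambda^{O(\log(1/\epsilon))})$ edges and the oracle has size $n\cdot\lambda^{O(\log(1/\epsilon))}$; the label of $x$ concatenates $|N_x|$ Slivkins labels of size $\lambda^{O(\log(1/\epsilon))}\log k\log\log\Delta_k$ each, yielding the claimed bound after absorbing the constant in the exponent. The main subtlety is calibrating $N_x$: the net scale must be $O(\epsilon^2 r_x)$ rather than $O(\epsilon r_x)$ so that the additive error is absorbed into $O(\epsilon)d(x,v)$ even when $d(x,v)$ is as small as $r_x$, and the net must cover a ball of radius $\Omega(r_x/\epsilon)$ so that the close and far regimes overlap with no gap into which $v$ could fall.
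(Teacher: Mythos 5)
Your proposal is correct and follows essentially the same route as the paper: build the black-box structure on $(K,d)$ and attach each $x\in X\setminus K$ to a bounded-size terminal net of a ball of radius $\Theta(r_x/\epsilon)$, then split the stretch analysis into the far regime (route via the nearest terminal) and the close regime (route via the net point near $v$). The only cosmetic difference is that you use an $\epsilon^2 r_x$-net where the paper's $\epsilon r_x$-net already suffices (since $d(x,v)\ge r_x$ makes the additive error $\epsilon r_x\le\epsilon\, d(x,v)$), but this does not affect the $\lambda^{O(\log(1/\epsilon))}$ size bound.
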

Observe that the result for the labeling scheme seems to improves \corollaryref{cor:main}, which requires that the whole metric is doubling. (In fact, the label size in \theoremref{thm:just-k-doubling} is slightly larger, this fact is hidden by the constant in the $O(\cdot)$ notation.)

For embeddings, it is unclear how to use this extension approach, since it involves multiple points. We thus need to adjust the embedding itself. As an example to this adjustment, we have the following result, which strictly improves the corresponding item in \corollaryref{cor:main}. Its proof is in \sectionref{app:embed-only-k}.
\begin{theorem}\label{thm:embed}
Let $(X,d)$ be a metric space, and let $K\subseteq X$ of size $|K|=k$ so that $(K,d)$ has doubling constant $\lambda$. Then for any $0<\epsilon<1$ there exists a terminal embedding of $X$ into $\ell_\infty$ with distortion $1+\epsilon$, and dimension $\log k\cdot\lambda^{O(\log(1/\epsilon))}$.
\end{theorem}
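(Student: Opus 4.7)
The plan is to build an $\ell_\infty$ embedding of $(X,d)$ coordinatewise, reusing the coordinate template of the standard $\ell_\infty$ embedding of doubling metrics \cite{GKL03,N16} applied to $(K,d)$ but evaluating each coordinate using the ambient metric. Since $|K|=k$ and $(K,d)$ is $\lambda$-doubling, the target dimension $D=\log k\cdot\lambda^{O(\log(1/\epsilon))}$ matches the known dimension for embedding $(K,d)$ alone; the whole point is that the extension to $X\setminus K$ does not require any additional coordinates.

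Concretely, for every relevant scale $r_i = 2^i\delta$ I would build an $r_i$-net $N_i\subseteq K$ and, via a greedy argument backed by \claimref{prop:double} applied inside $(K,d)$, properly color $N_i$ with $\chi = \lambda^{O(\log(1/\epsilon))}$ colors so that any two points of the same color are at distance strictly greater than $2r_i/\epsilon$. For every resulting color class $C_{i,c}\subseteq N_i$ define the coordinate
$$f_{i,c}(x) \;=\; \max\bigl(0,\; r_i/\epsilon - d(x, C_{i,c})\bigr), \qquad x \in X,$$
where $d(x, C_{i,c}) = \min_{u\in C_{i,c}} d(x,u)$ is measured in the ambient metric of $X$. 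Each $f_{i,c}$ is $1$-Lipschitz in $(X,d)$, so the upper bound $\|f(x)-f(v)\|_\infty\le d(x,v)$ is immediate for every pair $(x,v)\in X\times K$.

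For the lower bound, fix $(x,v)\in X\times K$ and pick the scale $i$ for which $r_i/\epsilon\in[(1+2\epsilon)d(x,v),\,2(1+2\epsilon)d(x,v)]$; let $u_v\in N_i$ satisfy $d(v,u_v)\le r_i$, and let $c$ be the color of $u_v$. The choice of $i$ forces $d(x,u_v)\le d(x,v)+r_i\le r_i/\epsilon$, placing $u_v$ within $r_i/\epsilon$ of $x$, and the color separation $d(u_v,u')>2r_i/\epsilon$ together with the triangle inequality in $(X,d)$ forces every other $u'\in C_{i,c}$ to satisfy $d(x,u')>r_i/\epsilon$. Therefore $d(x,C_{i,c})=d(x,u_v)$ and $d(v,C_{i,c})=d(v,u_v)\le r_i$, and a short computation yields
$$f_{i,c}(v)-f_{i,c}(x)\;\ge\;d(x,u_v)-r_i\;\ge\;d(x,v)-2r_i\;\ge\;(1-O(\epsilon))\,d(x,v),$$
as needed. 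After absorbing constants by rescaling $\epsilon$, this gives the desired distortion $1+\epsilon$.

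The main obstacle I expect will be to tighten the dimension count from a naive $\log \Delta_K \cdot \lambda^{O(\log(1/\epsilon))}$ down to $\log k \cdot \lambda^{O(\log(1/\epsilon))}$; for this I would borrow the net-tree/ultrametric compression argument of \cite{GKL03,N16}, which merges scales at which the net structure of $(K,d)$ is unchanged and reduces the number of non-trivial scales to $O(\log k)$. A secondary item is to verify that the case analysis above truly covers every pair, including $x\in K$ (where it recovers the classical GKL guarantee on $K$) and pairs with $d(x,v)$ beyond the diameter of $K$ (handled by the top scale, whose truncation $r_i/\epsilon$ already dominates $d(x,v)$).
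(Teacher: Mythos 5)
Your construction of the individual coordinates is sound and close in spirit to the paper's: nets are taken only in $K$, split into $\lambda^{O(\log(1/\epsilon))}$ well-separated color classes, and each coordinate is a truncated distance to a color class evaluated in the ambient metric of $X$; the $1$-Lipschitz upper bound and the single-scale lower bound both check out. The genuine gap is the dimension count, and your proposed fix does not address the actual obstruction. With one coordinate per (scale, color) pair the dimension is $\lambda^{O(\log(1/\epsilon))}$ times the number of scales, and the relevant scales are determined by the distances $d(x,v)$ for $(x,v)\in K\times X$ --- a range governed by the aspect ratio of $X$, not of $K$ (a non-terminal $x$ can be arbitrarily close to, or arbitrarily far from, $K$). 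The net-tree/ultrametric compression of \cite{GKL03,N16} collapses scales at which the net structure \emph{of the embedded point set} is static; here the nets live on $K$ but the critical scales live on $X$, so that argument does not reduce the count to $O(\log k)$. The natural alternative --- reusing coordinates cyclically --- fails for your coordinates as written: at every scale $i$ far \emph{above} the critical scale of a pair $(x,v)$, the difference $f_{i,c}(v)-f_{i,c}(x)$ can still equal roughly $d(x,v)$ (both points are deep inside the truncation radius $r_i/\epsilon$), so many scales mapped to the same coordinate would stack up and destroy the expansion bound.

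The paper's missing ingredient is a scale-dependent contraction of the metric: at scale $i$ every edge from $x\in X$ to a terminal $v$ with $d(x,v)<2^{i-1}\epsilon/k$ is contracted to length $0$, and the coordinate functions are computed in the resulting shortest-path metric $d_i$ (which satisfies $d(x,y)-\epsilon 2^i\le d_i(x,y)\le d(x,y)$ since a shortest path uses at most $2k$ contracted edges). This forces $d_i(x,v)=0$, hence zero contribution, at all scales $i>i'+\log(2k/\epsilon)$ where $2^{i'}\approx d(x,v)$, while the truncation controls scales below $i'-\log(2k/\epsilon)$; consequently each pair is affected by only $O(\log(2k/\epsilon))$ scales, and coordinates can be reused modulo $D=O(t\log(2k/\epsilon))$, yielding dimension $\log k\cdot\lambda^{O(\log(1/\epsilon))}$ independent of the aspect ratio. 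You need this contraction (or an equivalent device that annihilates the contribution of a pair at all scales well above its critical one) before any coordinate-reuse or compression step can give the claimed dimension. Your "secondary item" also hides a real issue: the top scale of $K$ does \emph{not} handle pairs with $d(x,v)\gg\diam(K)/\epsilon$, since its truncation radius is only about $\diam(K)/\epsilon$; this too is resolved by letting the scales run up to $\diam(X)$ and relying on the contraction plus reuse.
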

We remark that any embedding of $(X,d)$ into $\ell_\infty$ with distortion less than 3 for all pairs, requires in general dimension $\Omega(n)$ \cite{matbook}. We also note that a terminal version of the JL lemma is impossible whenever only $K$ is Euclidean, and $X\setminus K$ is not. To see this, note that any three vertices of $K_{2,2}$ admit an isometric embedding to $\ell_2$, but embedding all four requires distortion $\sqrt{2}$. When only one vertex is non-terminal, all pairwise distances must be preserved up to $1+\epsilon$, which is impossible for $\epsilon<1/3$, say.

\subsection{Proof of \theoremref{thm:just-k-doubling}}

We prove the spanner result first.
Let $H$ be a spanner for $(K,d)$ with stretch $1+\epsilon$ and $k\cdot\lambda^{O(\log(1/\epsilon))}$ edges given by \cite{HM06}, say. For any $x\in X$, let $u=u(x)\in K$ be the closest terminal to $x$, and denote $R=d(x,u)$. Take $N(x)$ to be an $\epsilon R$-net of $B(x,2R/\epsilon)\cap K$, by \claimref{prop:double}, $|N(x)|\le \lambda^{O(\log(1/\epsilon))}$. Add the edges $\{(x,v)\}_{v\in N(x)}$, each with weight $d(x,v)$ to the spanner. Since we added $\lambda^{O(\log(1/\epsilon))}$ edges for each point, the bound on the number of edges follows, and it remains to bound the stretch by $1+O(\epsilon)$. Clearly no distances can contract, and we bound the expansion. Fix $x\in X$ and $v\in K$, and denote $u=u(x)$ with $R=d(x,u)$. In the case $v\notin B(x,2R/\epsilon)$ we have that $R\le \epsilon\cdot d(x,v)/2$, so that
\begin{eqnarray*}
d_H(x,v)&\le& d_H(x,u)+d_H(u,v)\le d(x,u)+(1+\epsilon)d(u,v)\\&\le&(2+\epsilon)d(x,u)+(1+\epsilon)d(x,v)= (2+\epsilon)R+(1+\epsilon)d(x,v)\\&\le& (1+3\epsilon)d(x,v)~.
\end{eqnarray*}
Otherwise, $v\in B(x,2R/\epsilon)$. Let $v'\in N(x)$ be the nearest net point to $v$, with $d(v,v')\le\epsilon R\le \epsilon\cdot d(x,v)$ (recall $u$ is the nearest terminal to $x$). Then
\begin{eqnarray*}
d_H(x,v)&\le& d_H(x,v')+d_H(v',v)\\&\le& d(x,v')+(1+\epsilon)d(v',v)\\&\le& d(x,v)+(2+\epsilon)d(v',v)\\&\le& d(x,v)+(2+\epsilon)\epsilon\cdot d(x,v)\\&\le& (1+3\epsilon)d(x,v)~.
\end{eqnarray*}

The proof for the labeling scheme (and also distance oracle) is similar. Apply the black-box scheme on $(K,d)$, and for each $x\in X\setminus K$ define $N(x)$ as above, and $x$ stores all labels for $v'\in N(x)$ along with $d(x,v')$. Given a query $(x,v)$, return $\min_{v'\in N(x)}\{d(x,v')+\hat{d}(v,v')\}$, where $\hat{d}$ is the distance function of the labeling scheme.

\subsubsection{Lower Bound}
We now show that when only $K$ is doubling, one cannot achieve a result as strong as \theoremref{thm:main} (there the number of edges in a spanner with stretch $1+\epsilon$ can be as low as $n+o(n)$). In fact, \theoremref{thm:just-k-doubling} is tight up to a constant factor in the exponent of $\lambda$. 
\begin{claim}\label{claim:lower}
There exists a constant $c>0$, so that for any (sufficiently large) integer $n$ and any integer $\lambda>1$, there is a metric $(X,d)$ on $n$ points with a subset $K\subseteq X$, so that $(K,d)$ has doubling constant $O(\lambda)$, but for any $0<\epsilon<1$, any terminal spanner of $X$ with stretch $1+\epsilon$ must have at least $n\cdot\lambda^{\log(c/\epsilon)}$ edges.
\end{claim}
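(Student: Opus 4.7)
The plan is the simplest possible reduction: take $K = X$ and invoke the classical lower bound on ordinary $(1+\epsilon)$-spanners of doubling metrics. This is legitimate because the claim only asserts existence of \emph{some} metric $(X,d)$ and \emph{some} terminal subset $K \subseteq X$, so we are free to set $K$ equal to all of $X$. With that choice the ``terminal'' constraint becomes vacuous, while the doubling hypothesis on $(K,d)$ is inherited automatically from $X$.

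Concretely, I would first invoke (or construct) a known hard instance: for every integer $\lambda > 1$ and every sufficiently large $n$, there is an $n$-point metric $(X,d)$ of doubling constant $O(\lambda)$ on which every $(1+\epsilon)$-spanner has at least $n \cdot \lambda^{\log(c/\epsilon)}$ edges, for some absolute constant $c > 0$. A canonical instance is a $d$-dimensional grid with $d = \lceil \log \lambda \rceil$, appropriately rescaled so that the grid spacing is $\Theta(\epsilon)$ relative to the grid side, and then laminated or tiled far apart to reach exactly $n$ points. This gives doubling constant $O(\lambda)$, and a standard packing/direction-counting argument shows that each grid point $x$ has $\Omega(\lambda^{\log(c/\epsilon)})$ near-equal-distance partners in pairwise well-separated ``directions'', each of which requires an essentially dedicated spanner edge at $x$.

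Next, I set $K := X$. Since $(X, d)$ is doubling with constant $O(\lambda)$, the same holds for $(K, d)$. Moreover, $K \times X = X \times X$, so a terminal $(1+\epsilon)$-spanner of $X$ is exactly an ordinary $(1+\epsilon)$-spanner, and the cited lower bound $n \cdot \lambda^{\log(c/\epsilon)}$ transfers verbatim. This matches the upper bound of \theoremref{thm:just-k-doubling} up to the constant $c$ in the exponent, and shows that the gap between \theoremref{thm:main} (where $n + o(n)$ edges suffice when all of $X$ is doubling) and \theoremref{thm:just-k-doubling} is unavoidable.

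The main obstacle is the lower bound for $(1+\epsilon)$-spanners on doubling metrics itself. If one is not willing to invoke it as a black box, the hardest step in a self-contained proof is the per-vertex direction-counting: use \claimref{prop:double} to bound how many distinct ``directions'' at a given point $x$ a single incident spanner edge can serve up to stretch $1+\epsilon$ (namely $O(1)$ of them), then show via an $\epsilon$-net argument on a sphere of doubling dimension $\log \lambda$ that the number of directions that \emph{must} be served at $x$ is $\Omega(\lambda^{\log(c/\epsilon)})$. Summing over the $\Theta(n)$ grid points yields the claimed total. All other steps (the reduction to $K = X$, the replication of the grid to arbitrary $n$, and the inheritance of the doubling constant) are straightforward once this per-vertex statement is in place.
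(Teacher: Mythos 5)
Your proposal diverges from the paper's proof in a way that introduces a real gap. You offload the entire difficulty onto a ``classical'' lower bound asserting that some $n$-point doubling metric with doubling constant $O(\lambda)$ forces $n\cdot\lambda^{\Omega(\log(1/\epsilon))}$ edges in \emph{every} ordinary $(1+\epsilon)$-spanner, and your sketch of that lower bound does not work as stated. The per-vertex ``direction-counting'' step --- that each of the $\Omega((1/\epsilon)^{d-1})$ well-separated directions at a grid point $x$ requires a dedicated edge incident to $x$ --- is false as an argument: a $(1+\epsilon)$-spanner path from $x$ to a far partner $y$ need not begin with a long edge pointing toward $y$; in a grid it can start with arbitrarily short hops to neighboring points, so you cannot charge distinct directions to distinct incident edges without a much more delicate accounting. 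A tight sparsity lower bound of this form for Euclidean/doubling point sets is a genuine theorem (and was not available as a black box when this paper was written), not a routine packing exercise, so the ``hardest step'' you identify is in fact the whole claim. A secondary problem is that setting $K=X$ defeats the purpose of the claim in context: it appears in the section where only $K$ is doubling, and is meant to show that the $n+o(n)$ bound of \theoremref{thm:main} is unattainable there; with $k=n$ your instance is fully doubling and the bound $n\cdot\lambda^{\log(c/\epsilon)}$ is exactly what \theoremref{thm:main} already delivers, so no separation is exhibited.

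The paper's construction avoids both issues by making $X$ drastically non-doubling while keeping only $K$ doubling and small. It takes $K$ to be an $\epsilon$-net of the unit sphere in $\R^{\lceil\log\lambda\rceil}$, so $|K|=\lambda^{\log(c/\epsilon)}$, places every point of $X\setminus K$ at distance exactly $1$ from every terminal and at distance $2$ from every other non-terminal. Then for each pair $(x,y)\in(X\setminus K)\times K$, any path avoiding the direct edge either passes through another non-terminal (length $\ge 2$) or through another terminal $y'$ (length $\ge 1+\|y-y'\|>1+\epsilon$), so all $(n-|K|)\cdot|K|=\Omega(n\cdot\lambda^{\log(c/\epsilon)})$ such edges are forced in one line. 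If you want to salvage your route, you would need to either prove the Euclidean sparsity lower bound rigorously or adopt a gadget like the paper's in which alternative paths provably detour by more than $\epsilon$ in a single step.
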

\begin{proof}
Let $t=\lceil\log\lambda\rceil$, and let $K$ be an $\epsilon$-net of the unit sphere of $\R^t$. It is well known that $|K|=\Theta(1/\epsilon)^{t-1}=\lambda^{\log(c/\epsilon)}$ for some constant $c$.\\ Define $(X,d)$ by setting for each $x,y\in X$, $d(x,y)=\left\{\begin{array}{ccc} \|x-y\|_2 & x,y\in K\\ 1& x\in X\setminus K, y\in K\\
2 & x,y\in X\setminus K \end{array} \right.$.
Note that distances between points in $K$ correspond to the Euclidean distance, and are at most 2, so that $K$ has doubling constant $O(\lambda)$. Observe that any spanner with stretch $1+\epsilon$ must contain all the edges in $K\times X$, because the distance between any two points in $K$ is larger than $\epsilon$, so any path from $x\in X\setminus K$ to $y\in K$ that does not contain the edge $(x,y)$, will be of length greater than $1+\epsilon$.
\end{proof}



\subsection{Proof of \theoremref{thm:embed}}\label{app:embed-only-k}

We follow the embedding technique of \cite{N16}, but with different edge contractions defined below. Assume w.l.o.g that the minimal distance in $(X,d)$ is 1.
Let $\Delta = \diam(X)$, and for all $0\le i\le \log\Delta$ let $(X,d_i)$ be the metric defined as follows: consider the complete graph on vertex set $X$, with edge $\{u,v\}$ having weight $d(u,v)$. For every $x\in X$ and $v\in K$ with $d(x,v)<2^{i-1}\cdot \epsilon/k$, replace the weight of this edge by $0$, and let $d_i$ be the shortest path metric on this graph. Since any shortest path in this graph has at most $2k$ edges that contain a vertex in $K$, we have that $d(x,y)-\epsilon\cdot 2^i\le d_i(x,y)\le d(x,y)$ for all $x,y\in X$.

For each $0\le i\le \log\Delta$ take a $r_i$-net $N_i$ with respect to $(K,d_i)$ (i.e., take only terminals to the net), where $r_i=\epsilon\cdot 2^{i-2}$. Partition each $N_i$ into $t=\lambda^{O(\log(1/\epsilon))}$ sets $N_{i1},\dots,N_{it}$, such that for each $u,v\in N_{ij}$, $d_i(u,v)\ge 5\cdot 2^i$. (To obtain $N_{ij}$, one can greedily choose points from $N_i\setminus(\bigcup_{j'<j}N_{ij'})$ until no more can be chosen. See \cite{N16} for details.)
Next we define the embedding, fix $D=\lceil 2t\log (2k/\epsilon)\rceil$, and let $\{e_0,\dots,e_{D-1}\}$ be the standard orthonormal basis for $\R^D$, extended to an infinite sequence $\{e_j\}_{j\in\N}$ (that is, $e_j = e_{j~(\textrm{mod } D)}$ for all $j\in \N$).
For any $0\le i\le\log\Delta$ and $0\le j\le t-1$, for $x\in X$ let
\[
g_{ij}(x)=\min\{2^{i+1}, d_i(x,N_{ij})\}~.
\]
Define the embedding $f:X\to\R^D$ by
\[
f(x) = \sum_{i=0}^{\log\Delta}\sum_{j=0}^{t-1}g_{ij}(x)\cdot e_{it+j}~.
\]

{\bf Expansion Bound:}
Now we show that the embedding $f$ under the $\ell_\infty$ norm does not expand distances for pairs in $X\times K$ by more than a factor of $1+\epsilon$. Fix a pair $x\in X$ and $v\in K$, and consider the $h$-th coordinate of the embedding $f_h$, with $0\le h\le D-1$. We have that $f_h(x)-f_h(v)=\sum_{i,j~:~ h=it+j(\mmod~D)}g_{ij}(x)-g_{ij}(v)$. Let $0\le i'\le\log\Delta$ be such that $2^{i'-1}\le d(x,v)<2^{i'}$, then for all $i>i'+\log (2k/\epsilon)$ it holds that $d(x,v)<2^{i-1}\cdot\epsilon/k$ and thus $d_i(x,v)=0$, in particular, $g_{ij}(x)=g_{ij}(v)$ and so there is no contribution at all from such scales. By the triangle inequality we also have that $g_{ij}(x)-g_{ij}(v)\le d_i(x,v)$ and $g_{ij}(x)-g_{ij}(v)\le 2^{i+1}$ for all $0\le i\le \log\Delta$ and $0\le j\le t-1$.
\begin{eqnarray*}
f_h(x)-f_h(v)&\le&\sum_{i,j~:~ i\le i'+\log (2k/\epsilon), h=it+j(\mmod~D)}g_{ij}(x)-g_{ij}(v)\\
&\le&\sum_{i,j~:~ i'-\log (2k/\epsilon)< i \le i'+\log (2k/\epsilon), h=it+j(\mmod~D)}g_{ij}(x)-g_{ij}(v)\\
 &&+\sum_{i\le i'-\log (2k/\epsilon)}2^{i+1}\\
&\le&d_i(x,v)+2^{i'+1}\cdot\epsilon/k\\
&\le& d(x,v)(1+\epsilon)~.
\end{eqnarray*}
The third inequality holds, since by the choice of $D$ there is at most one possible choice of $i,j$ with $i'-\log (2k/\epsilon)<i<i'+\log (2k/\epsilon)$ such that $h=it+j(\mmod ~D)$, and the last inequality uses that $k\ge 4$. By symmetry it follows that $|f_h(x)-f_h(v)|\le d(x,v)(1+\epsilon)$, and thus $|f(x)-f(v)|\le d(x,v)(1+\epsilon)$.

{\bf Contraction Bound:}
Now we bound the contraction of the embedding for pairs containing a terminal. Fix $x\in X$ and $v\in K$. We will show that there exists a single coordinate $0\le h\le D-1$ such that $|f_h(x)-f_h(v)|\ge (1-\epsilon)d(x,v)$. Let $0\le i\le\log\Delta$ such that $2^i\le d(x,v)<2^{i+1}$, and let $0\le j\le t-1$ be such that $d_i(v,N_{ij})\le r_i$ (such a $j$ must exist because $N_i$ is an $r_i$-net of $K$). Denote by $u\in N_{ij}$ the point satisfying $d_i(v,N_{ij})=d_i(v,u)$. Since $r_i=\epsilon\cdot 2^{i-2}$ also $g_{ij}(v)\le r_i$.

We claim that $d_i(x,N_{ij})=d_i(x,u)$. To see this, first observe that $d_i(x,u)\le d_i(x,v)+d_i(v,u)\le 2^{i+1}+r_i<(5/4)\cdot 2^{i+1}$. Consider any other $y\in N_{ij}$, by the construction of $N_{ij}$, $d_i(y,u)\ge 5\cdot 2^i$, so $d_i(y,x)\ge d_i(y,u)-d_i(x,u) > (5/2)\cdot 2^{i+1}-(5/4)\cdot 2^{i+1} = (5/4)\cdot 2^{i+1} > d_i(x,u)$. Thus it follows that either $g_{ij}(x)=2^{i+1}\ge d_i(x,y)$, or $g_{ij}(x)=d_i(x,u)\ge d_i(x,v)-d_i(v,u)\ge d_i(x,v) - r_i$.
Using that $d_i(x,v)\ge d(x,v)-\epsilon\cdot 2^i$, we conclude that
\[
g_{ij}(x) - g_{ij}(v)\ge (d_i(x,v) - r_i) - r_i = d_i(x,v)-\epsilon\cdot 2^{i-1} \ge d(x,v)-2\epsilon\cdot 2^i\ge(1-2\epsilon)\cdot d(x,v)~.
\]

Let $0\le h\le D-1$ be such that $h=it+j(\mmod~D)$, for the values of $i,j$ fixed above. Then we claim that any other pair $i',j$ such that $h=i'k+j(\mmod~D)$ has either $0$ or very small contribution to the $h$ coordinate. If $i'>i$ then it must be that $i'\ge \log (2k/\epsilon) + i+1$ so that $d(x,v)\le 2^{i+1}<2^{i'-1}\cdot\epsilon/k$, thus as before $g_{i'j}(x)=g_{i'j}(v)$. For values of $i'$ such that $i'<i$, then $i'\le i-\log (2k/\epsilon)$, thus
\begin{eqnarray*}
\sum_{i'<i,j ~:~ h=i't+j (\mmod~D)}|g_{i'j}(x) - g_{i'j}(v)|&\le&\sum_{i'\le i- \log (2k/\epsilon)}2^{i'+1}\\
&\le& 2^i\cdot 2\epsilon/k\\
&\le& \epsilon\cdot d(x,v)~.
\end{eqnarray*}
Finally,
\begin{eqnarray*}
\|f(x)-f(v)\|_\infty&\ge& |f_h(x)-f_h(v)|\\
&\ge& |g_{ij}(v) - g_{ij}(x)|-\!\!\!\!\!\!\!\!\sum_{i'<i,j ~:~ h=i't+j (\mmod~D)}\!\!\!|g_{i'j}(x) - g_{i'j}(v)|\\
&\ge&d(x,v)(1-3\epsilon)~.
\end{eqnarray*}

\section{Acknowledgements}
We are grateful to Paz Carmi for fruitful discussions.

\bibliographystyle{alpha}
\bibliography{spanner}

\end{document}